\newcommand{\reffig}[1]{Fig. \ref{#1}}
\newcounter{ct}
\newtheorem{theorem}{Theorem}
\begin{document}
\title{{Net-Zero} Integrated Sensing and Communication in Backscatter Systems}
\author{
	Yu Zhang, Tongyang Xu, Christos Masouros$^\dag$, and Zhu Han$^\ddag$\\
	\normalsize	School of Engineering, Newcastle University, Newcastle upon Tyne NE1 7RU, United Kingdom\\
	{$^\dag$Department of Electronic and Electrical Engineering, University College London, London WC1E 6BT, United Kingdom}\\
	$^\ddag$Department of Electrical and Computer Engineering, University of Houston, TX 77004, USA\\
	Email: yu.zhang@newcastle.ac.uk, tongyang.xu@newcastle.ac.uk, c.masouros@ucl.ac.uk, zhan2@uh.edu
\thanks{This work was supported in part by the xxx.}
}

\maketitle
\begin{abstract}
Future wireless networks targeted for improving spectral and energy efficiency, are expected to simultaneously provide sensing functionality and support low-power communications. This paper proposes a novel net-zero integrated sensing and communication (ISAC) model for backscatter systems, including an {access point (AP)}, a net-zero device, and a user receiver. We fully utilize the backscatter mechanism for sensing and communication without additional power consumption and signal processing in the hardware device, which reduces the system complexity and makes it feasible for practical applications. To further optimize the system performance, we design a novel signal frame structure for the ISAC model that effectively mitigates communication interference at the transmitter, tag, and receiver. Additionally, we employ distributed antennas for sensing which can be placed flexibly to capture a wider range of signals from diverse angles and distances, thereby improving the accuracy of sensing. We derive theoretical expressions for the symbol error rate (SER) and {tag's location detection probability}, and provide a detailed analysis of how the system parameters, such as transmit power and tag's reflection coefficient, affect the system performance.
\end{abstract}

\begin{IEEEkeywords}
ISAC, {net-zero}, backscatter communication, localization, distributed antenna.
\end{IEEEkeywords}

\section{introduction}
Integrated sensing and communication (ISAC) has emerged as an important technology in the evolution towards 6G communications. The key concept of ISAC is integrating sensing and communication functions into one device, which improves the {spectrum and energy efficiency} and reduces the hardware cost \cite{OpenJ}. This dual functionality makes ISAC beneficial in applications such as autonomous vehicles, smart cities, and the Internet of Things (IoT), where real-time sensing and communication are essential. ISAC has gained significant research interest in {areas}, such as waveform design, and trade-off analysis \cite{full}.


With the popularity of sustainable development and energy-saving concepts, {net-zero} devices have attracted much attention \cite{yu}. Unlike traditional devices that require batteries, net-zero devices operate without a battery. Instead, they harvest energy from the surrounding environment, particularly from ambient radio frequency (RF) signals. 
The working mechanism of net-zero devices is backscattering, where they reflect incoming RF signals from nearby transmitters, such as TV towers, mobile phones, or WiFi routers, to convey information \cite{ambient}. This backscatter mechanism allows the devices to operate at extremely low power levels, often in the microwatt range. 

ISAC technology is particularly well-suited for net-zero devices since they can perform sensing and communication tasks simultaneously using the same RF signals. The net-zero devices can provide sensing information while simultaneously transmitting information data to a specific receiver. However, there is limited research conducted on {backscatter enabled ISAC} system. In \cite{ABC}, the authors introduce an ISAC framework including a base station (BS), a tag, and a user, which derives the communication rate of the user and the tag, as well as the sensing rate at the BS. Similar to conventional ISAC systems, this model still requires optimization of the power allocation factor to balance the trade-off between communication and sensing. Another study involves ISAC for reconfigurable intelligent surface (RIS)-assisted backscatter communication \cite{risassisted}, where the BS simultaneously detects backscattered signals from the RIS-assisted IoT devices and senses targets based on the echo signals. However, this study doesn't consider the interference between different IoT devices.

The above-mentioned literature on ISAC system design mainly focuses on the joint waveform design for sensing and communication. These studies do not fully leverage the potential of backscattered signals from net-zero devices, which can be used for both communication and sensing. In our study, we utilize an access point (AP), such as Wi-Fi, to send signals to net-zero devices. The backscattered signal from the net-zero device can be used for sensing purposes, such as localization. This approach eliminates the need for additional waveform designs for sensing and communication, thereby reducing system complexity and enhancing energy efficiency.

\section{System Model}

\begin{figure}[t]
	\centering
	\includegraphics[width=80mm]{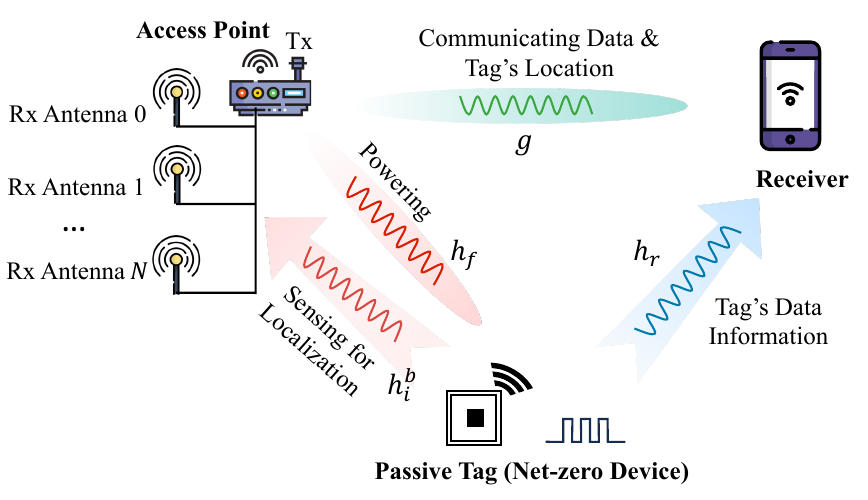}\\
	\caption{System model of the ISAC for net-zero devices.}\label{model3}
\end{figure}

In \reffig{model3}, the proposed net-zero ISAC system model consists of an AP, a passive tag, and a receiver. The {AP} is equipped with a single transmit antenna and $N$ distributed receive antennas. The advantage of using distributed antennas is that they can be placed across different indoor locations. This placement flexibility enables the AP to capture a wider range of signals from diverse angles and distances, thereby improving the accuracy of sensing. Both the transmitter and receiver antennas have omnidirectional radiation patterns. 

The AP sends a query signal to activate the passive tag. Subsequently, the tag modulates its data onto the incident RF signals and backscatters them toward both the AP and the receiver. The AP leverages the received backscattering signals across all $N$ distributed receive antennas for sensing. In this paper, we focus on using the sensing information for localization. The advantage of localization sensing lies in three aspects. Firstly, the sensing results assist {in beamforming for} powering the tag. Initially, an omnidirectional beam is used to both power the tag and sense its location. Once the location is determined, a directional beam is employed to enhance power delivery to the tag, simultaneously improving communication performance. Secondly, in indoor environments, localization can track user movements and detect intrusions. Thirdly, in underground spaces where GPS signals are unavailable, localization proves effective.

Since the proposed ISAC model is based on existing hardware, it does not require additional hardware costs. The signal processing for localization is conducted at the AP. The receiver, a low-cost device, captures both the signal from the AP and the backscattered signals from the tag. These backscattered signals contain the tag's data, such as temperature and humidity, which the receiver demodulates. Simultaneously, the signals from the AP provide communication services and the location information of the tag.

\subsection{Frame Structure Design}
To reduce the complexity of the proposed system, we design a novel frame structure of the ISAC model that effectively mitigates interference at the transmitter, tag, and receiver. As illustrated in \reffig{frame}, the frame structure contains a downlink (DL) time slot and an uplink (UL) time slot. During the DL time slot, the AP sends signals to power the tag while simultaneously transmitting communication data to the receiver. In contrast, during the UL time slot, the AP is configured to only receive signals from the tag without transmitting any signals itself. Consequently, during the UL time slot of the AP, the receiver only captures the backscattered signals from the tag, free from interference caused by transmissions from the AP. This strategic design of the frame structure effectively achieves interference elimination, reducing system complexity.

\begin{figure}[t]
	\centering
	\includegraphics[width=88mm]{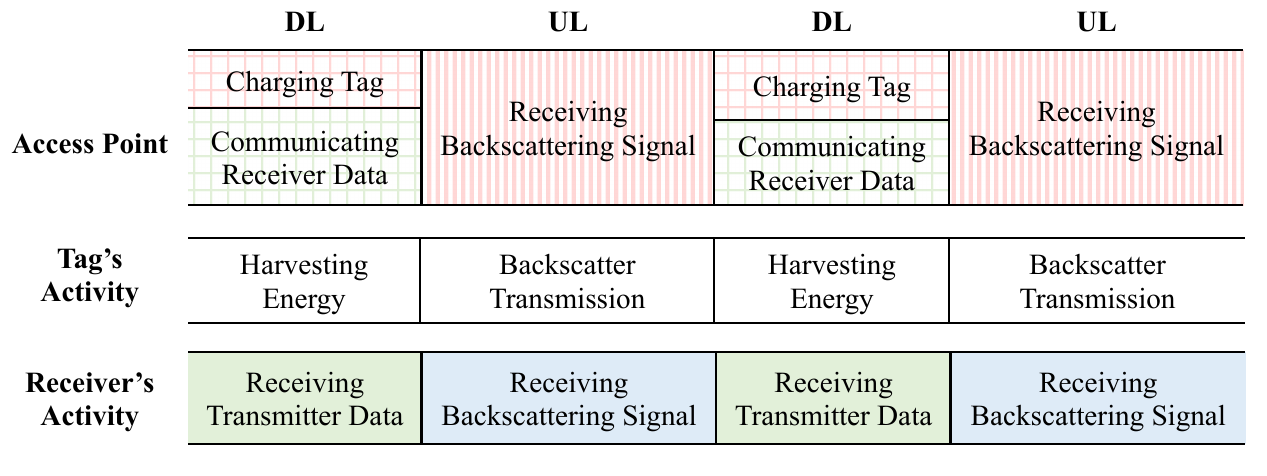}\\
	\caption{Frame structure of AP corresponding to the activities of the passive tag and receiver.}\label{frame}
\end{figure}
\begin{figure*}[t]
	\centering 
	\subfigure[Passive tag circuit including the antenna, energy harvester and modulator.]
	{
		\includegraphics[width=0.3\linewidth]{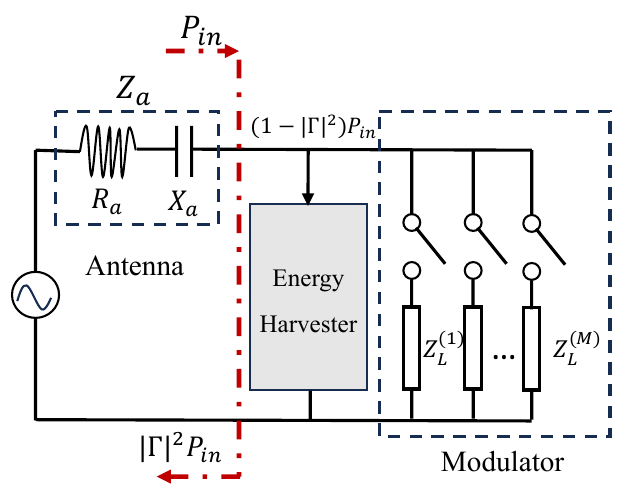}
		\label{tag}
	}
	\hspace{0.5cm}
	\subfigure[Impedance Smith chart: mapping the load impedance and reflection coefficient.]
	{
		\includegraphics[width=0.3\linewidth]{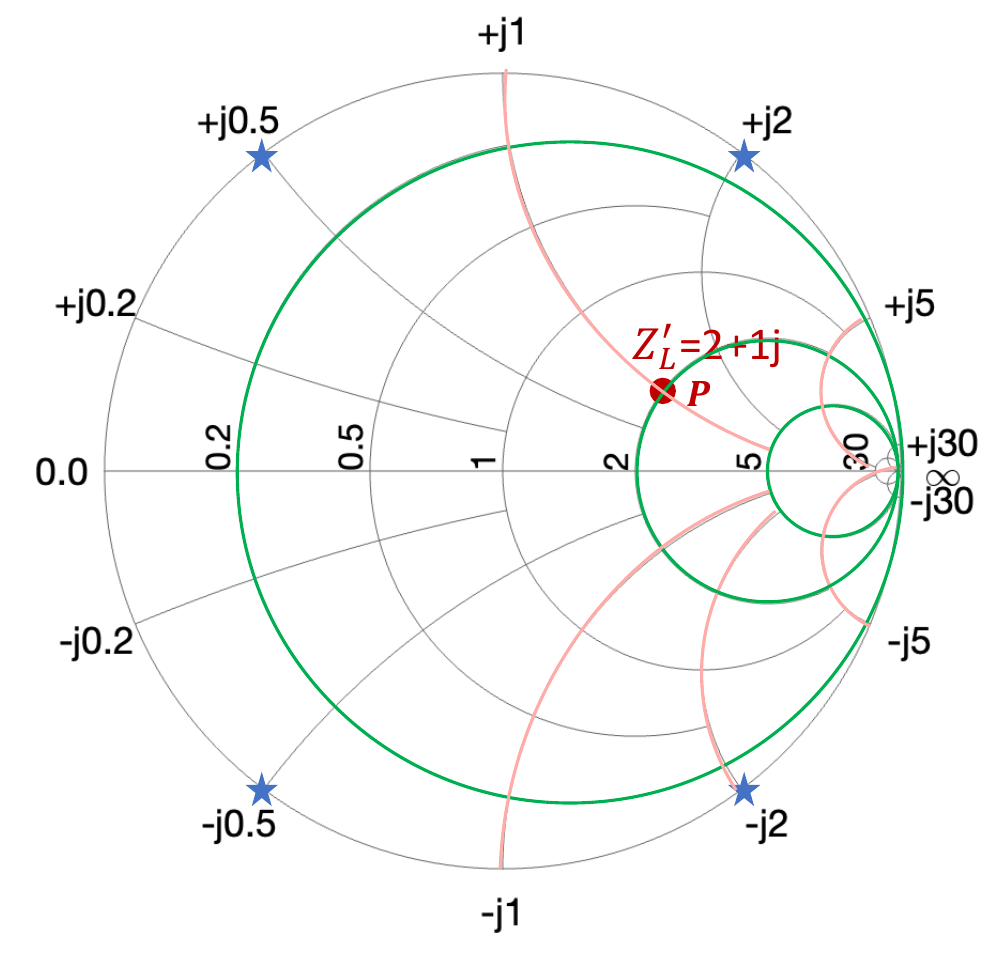}
		\label{smith}
	}
	\caption{The passive tag circuit and the Smith chart.}
	\label{modulation}
\end{figure*}

\subsection{Backscatter Modulation}
As a net-zero device, the passive tag doesn't have an internal battery to supply the power. Instead, it harvests the energy from the incident RF signals, modulating its own data onto the RF carrier and reflecting it back to the AP. The modulation of a tag is achieved by adjusting the reflection coefficient to realize different modulation schemes.
As shown in \reffig{tag}, to create an $M$-ary modulation, the tag utilizes $M$ load impedance values to generate $M$ different reflection coefficients, {with the $m$-th coefficient} defined as \cite{ambient}
\begin{equation}\label{coefficient}
\Gamma^{(m)}=\frac{Z_L^{(m)}-Z_a}{Z_L^{(m)}+Z_a}=\Gamma_R^{(m)}+j\Gamma_I^{(m)},
\end{equation}
where $Z_L=R_L+jX_L$ is the complex load impedance and $Z_a=R_a+jX_a$ is the complex antenna impedance. The real components, $R_L$ and $R_a$ are the load and antenna resistances, respectively, while the imaginary components, $X_L$, and $X_a$ are the load and antenna reactances, respectively. In \eqref{coefficient}, the reflection coefficient $\Gamma$ is a complex value comprising the real part $\Gamma_R$ and imaginary part $\Gamma_I$. This complex nature allows the modulation of both the amplitude and phase of the reflected signals by adjusting the load impedance $Z_L$.  As shown in \reffig{tag}, the incident RF signals captured by the passive tag's antenna, $|\Gamma|^2P_{in}$ of RF power is backscattered while $(1-|\Gamma|^2)P_{in}$ is absorbed in the load. 

The Smith chart is an effective tool for mapping load impedance to the reflection coefficient. To utilize the Smith chart, we need to first normalize the load impedance, i.e., $Z_L'=Z_L/Z_a$, and then $\Gamma$ is calculated as $\Gamma=\frac{Z_L'-1}{Z_L'+1}$.
As shown in \reffig{smith}, the Smith chart consists of two types of circles: the green circle represents the real part and the pink circle represents the imaginary part of $Z_L'$.
For example, given $Z_a=50\Omega$ and $Z_L=100+j50\Omega$, after normalizing, $Z_L'$ becomes $2+j$.  Plot the intersection point of the resistance circle with value 2 and the reactance circle with value 1j, i.e., point $P$ in \reffig{smith}. The projections on the horizontal and vertical axes of the Cartesian coordinates are the real part and the imaginary part of the reflection coefficient, respectively. Here we get point $P$ corresponding to the reflection coefficient $\Gamma=0.4+0.2j$. {For example}, by adjusting the value of $Z_L^{(m)}$, we can easily achieve 4-QAM modulation, as demonstrated by the four blue star points shown in \reffig{smith}.
\subsection{{AP} Sensing Model}
The received signal strength (RSS) {at AP which is backscattered} from the tag can be used as sensing information for localization. RSS measurements can be easily obtained from the received antennas without requiring additional hardware. Moreover, this approach is cost-effective and {low-power}, making it particularly suitable for net-zero device localization. The fundamental principle of the RSS-based localization method involves computing the distance between the transmitter and receiver by comparing the transmit and receive power.
The total received power at the $i$-th antenna is given by \cite{cascaded}
\begin{equation}\label{receivedpower}
{P_i^R =\chi P_{s}G_{tag}G_TG_R^iL(d_f)L(d_i^b) \left|h_f h^b_i\Gamma\right|^2,}
\end{equation}
where $\chi$ is the polarization loss factor, $P_s$ is the transmit power of AP, $G_{tag}$ is the antenna gain of the tag, $G_T$ and $G_R^i$  are the antenna gains of transmit antenna and $i$-th receive antenna of AP, respectively. The distance between transmit antenna and {$i$-th} receive antenna of AP to the tag are denoted as $d_f$ and $d_i^b$, respectively. $L(\cdot)$ is the channel path loss model, given by \cite{book} 
\begin{equation}
L(d_k) =\left( \frac{d_k}{d_0}\right)^{-\eta},
\end{equation}
where $\eta$ is the path loss exponent and $d_0$ is the reference distance. 

It is important to note that the received backscattered power must exceed the sensitivity threshold of the receiving antennas. Under this condition, the tag's location becomes detectable. For a given tag, if it is captured by receiving antenna $i$, we denote the \emph{antenna capture indicator} $L_i $ as 1, where it is defined as: \cite{IoTlocalization}
\begin{equation}
L_i=
\begin{cases}
1,& P_i^R\geq P_{th},\\
0,&\text{{otherwise}},
\end{cases}
\end{equation}
where $P_{th}$ is the sensitivity threshold. 
In the RSS-based trilateration localization method, determining a tag's location requires a minimum of three distinct RSS measurements. Consequently, the probability of successfully localizing a tag is defined by the following:
\begin{equation}\label{outageprobility}
P_{Loc} = \Pr\left\{\sum\limits_{i=1}^{N}L_i\geq 3\right\}.
\end{equation}
\section{Performance Analysis}
In this section, we will give the performance analysis of the proposed ISAC for net-zero devices, which is focused on how the transmit power of AP and reflection coefficient of tag affect the {tag's location detection probability} and symbol error rate (SER). Specifically, we will present a detailed derivation of the tag's location detection probability and theoretical upper bound of SER under different modulation schemes. This analysis aims to provide insights on how to adjust the transmit power of AP, the position of distributed antennas and reflection coefficient of tag to improve the system performance.
\subsection{Localization Performance}
The RSS-based localization method relies on the strength of the received signal from different positions, which is significantly influenced by multipath propagation. Therefore, choosing an appropriate channel fading model is very important. In this paper, we focus on the  Rician and Rayleigh fading models to represent indoor signal propagation. The Rician model is usually for scenarios where there is a dominant line-of-sight (LOS) component along with numerous indirect paths. The Rayleigh model is used in environments where there is no dominant line-of-sight path, and the signal is scattered in multiple directions.
If we consider $h_f$ and $h^b_i$ are independent Rician random variables, then the probability density function (PDF) of $h_i=h_fh^b_i$ is expressed as \cite{rician}
\begin{equation}
\begin{aligned}
p_{h_i}(x) &= \frac{x\exp\left[-(K_f+K_b^i)\right]}{\sigma_f^2\sigma_b^2}\!\!\sum\limits_{m=0}^{\infty}\sum\limits_{n=0}^{\infty}\frac{1}{(m!n!)^2}\!\!\left(\frac{xK_f}{2\sigma_f^2}\right)^m\\
&\times\!\left(\frac{xK_b^i}{2\sigma_b^2}\right)^n\!\!\left(\frac{\sigma_f}{\sigma_b}\right)^{m-n}\!\!\textbf{K}_{m-n}\!\!\left(\frac{x}{\sigma_f\sigma_b}\right), x\geq 0,
\end{aligned}
\end{equation}
where $\textbf{K}(\cdot)$ is the modified Bessel function of the second kind, $\{K_f, \sigma_f\}$ and $\{K_b^i,\sigma_b\}$ are the Rician shape parameters for both the forward and backscatter links, respectively, where $K_f$ and $K_b^i$ is defined as the ratio of the power contributions by line-of-sight (LoS) path to the remaining multipath. For the special case of $K_f=K_b^i=0$, the envelope of $h_f$ and $h_i^b$ become two independent Rayleigh random variables.

To derive the localization probability in \eqref{outageprobility}, we can use the binomial distribution to calculate the probability of at least three out of I receive antennas' RSS measurements exceeding the threshold $P_{th}$. Let's denote $X=\sum\limits_{i=1}^{N}L_i$, and then \eqref{outageprobility} can be rewritten as
\begin{equation}
\begin{aligned}
P_{Loc}&=1-\Pr\{X\leq3\}\\
&=1-\Pr\{X=0\}-\Pr\{X=1\}-\Pr\{X=2\},
\end{aligned}
\end{equation}
where $\Pr\{X=k\}=\sum\limits_{i=1}^k\left(\prod\limits_{j=1,j\neq i}^kp_j\right)(1-p_i)$ and
 $p_i$ is given as follows
\begin{equation}\label{pi}
\begin{aligned}
p_i &=\Pr\left\{P_i^R\geq P_{th}\right\}\\
&=\Pr\left\{\chi P_{s}G_{tag}G_TG_R^iL(d_f)L(d_i^b) \left|h_f h^b_i\Gamma\right|^2\geq P_{th}\right\},
\end{aligned}
\end{equation}
where $p_i$ represents the probability that $i$-th receive antenna of AP can detect the location of tag. Let's denote $\zeta=\chi P_{s}\left|G_{tag}G_TG_R^iL(d_f)L(d_i^b)\Gamma\right|^2$, then \eqref{pi} can be written as
\begin{equation}
\begin{aligned}
p_i&=1-\int_{0}^{\sqrt{P_{th}/\zeta}}p_{h_i}(x)dx\\
&=1\!\!-\!\!\frac{A}{2}\!\!\sum\limits_{m=0}^{\infty}\sum\limits_{n=0}^{\infty}\int_{0}^1\!\!\!\left(\frac{P_{th}}{\zeta}\right)^{\frac{m+n+2}{2}}\!\!\!\!\!\!x^{\frac{i+l}{2}}\textbf{K}_{m-n}\!\left(\!\sqrt{\frac{P_{th}x}{\zeta\sigma_f^2\sigma_b^2}}\right)\!\!dx,
\end{aligned}
\end{equation}
where $A= \frac{\exp\left[-(K_f+K_b^i)\right]}{\sigma_f^2\sigma_b^2(m!n!)^2}\left(\frac{K_f}{2\sigma_f^2}\right)^m\left(\frac{K_b^i}{2\sigma_b^2}\right)^n\left(\frac{\sigma_f}{\sigma_b}\right)^{m-n}$. According to the Eq. (6.592.2) in \cite{integralbook}, we can derive the tag location detection probability of antenna $i$ for the Rician fading scenario, as follows:
\begin{equation}\label{Ricianfading}
\begin{aligned}
p_i=1-\frac{A}{4}\sum\limits_{m=0}^{\infty}\sum\limits_{n=0}^{\infty}(2\sigma_f^2\sigma_b^2)^{m-n}\left(\frac{P_{th}}{\zeta}\right)^{n+1}\\
\times G^{2,1}_{1,3}\left(\left.\frac{P_{th}}{4\zeta\sigma_f^2\sigma_b^2}\right|_{m-n, 0, -n-1}^{-m}\right),
\end{aligned}
\end{equation}
where $G^{m,n}_{p,q}$ is the Meijer G-function, as defined in \cite[P1032, Eq. (9.301)]{integralbook}. From \eqref{Ricianfading}, we note that the derived tag location detection probability involves two infinite summation series, making it difficult to determine how the parameters affect system performance. To simplify the analysis and gain insights on parameter effects, we consider the scenario where $K_f=K_b^i=0$ in \eqref{Ricianfading}, corresponding to Rayleigh fading. The tag location detection probability under Rayleigh fading is then given by:
\begin{equation}
\begin{aligned}
p_i=1-\frac{P_{th}}{4\zeta\sigma_f^2\sigma_b^2}G^{2,1}_{1,3}\left(\left.\frac{P_{th}}{4\zeta\sigma_f^2\sigma_b^2}\right|_{0, 0, -1}^{0}\right).
\end{aligned}
\end{equation}
The derived $p_i$ under Rayleigh fading is mathematically less complex, involving only a single Meijer G-function. By examining the monotonicity of the function, we can directly obtain how parameters influence the system performance.
\begin{theorem}\label{theorem}
$f(x)=1-xG^{2,1}_{1,3}\left(x \Big|_{0, 0, -1}^{0}\right)$  is a monotonically decreasing when $x>0$.
\end{theorem}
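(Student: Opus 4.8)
\emph{Proof idea.} The plan is to show that the Meijer G-function in the definition of $f$ is not genuinely special: for the parameter list $(a_1;b_1,b_2,b_3)=(0;0,0,-1)$ it reduces to an elementary expression in the modified Bessel function $\textbf{K}_1$, after which the monotonicity claim follows from one differentiation. The starting point is the observation that $f(x)$ is, after renaming the argument, exactly the Rayleigh-fading detection probability $p_i$ obtained just above the theorem; hence it suffices to re-evaluate the integral $\int_0^{\sqrt{P_{th}/\zeta}}p_{h_i}(y)\,dy$ that produced it, but now in closed form rather than as a Meijer G-function.

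For the reduction, I would first note that when $K_f=K_b^i=0$ only the $m=n=0$ term of the double series for $p_{h_i}$ survives, leaving $p_{h_i}(y)=\tfrac{y}{\sigma_f^2\sigma_b^2}\textbf{K}_0\!\big(\tfrac{y}{\sigma_f\sigma_b}\big)$. The substitution $u=y/(\sigma_f\sigma_b)$ turns the integral into $\int_0^{t}u\,\textbf{K}_0(u)\,du$ with upper limit $t=\sqrt{P_{th}/(\zeta\sigma_f^2\sigma_b^2)}=2\sqrt{x}$, where $x=P_{th}/(4\zeta\sigma_f^2\sigma_b^2)$ denotes the Meijer-G argument. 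Invoking the standard identities $\tfrac{d}{du}\!\big[u\,\textbf{K}_1(u)\big]=-u\,\textbf{K}_0(u)$ and $\lim_{u\to0^+}u\,\textbf{K}_1(u)=1$ gives $\int_0^t u\,\textbf{K}_0(u)\,du=1-t\,\textbf{K}_1(t)$, and comparing with the Meijer-G form of $p_i$ yields the closed form
\begin{equation*}
f(x)=2\sqrt{x}\,\textbf{K}_1\!\big(2\sqrt{x}\big),\qquad x>0 .
\end{equation*}

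With this in hand the rest is immediate. Writing $t=2\sqrt{x}$, so that $dt/dx=2/t$, and using $\tfrac{d}{dt}\!\big[t\,\textbf{K}_1(t)\big]=-t\,\textbf{K}_0(t)$ once more, the chain rule gives $f'(x)=-2\,\textbf{K}_0\!\big(2\sqrt{x}\big)$. Since the modified Bessel function of the second kind obeys $\textbf{K}_0(z)>0$ for every $z>0$, this shows $f'(x)<0$ on $(0,\infty)$, so $f$ is strictly, hence monotonically, decreasing there.

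The step I expect to be the main obstacle is the reduction itself, i.e.\ being certain that the $G^{2,1}_{1,3}$ with the stated parameters collapses to $\tfrac{1}{x}\big(1-2\sqrt{x}\,\textbf{K}_1(2\sqrt{x})\big)$; I would settle this by re-deriving the underlying integral directly, as above --- which avoids any delicate manipulation of the Meijer-G symbol --- or, alternatively, by citing a tabulated Meijer-G reduction formula. Everything afterward (two elementary Bessel derivative identities, the limit $u\,\textbf{K}_1(u)\to1$, and positivity of $\textbf{K}_0$) is routine.
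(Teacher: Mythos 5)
Your proof is correct, and it reaches the paper's exact intermediate quantity, $f'(x)=-2\textbf{K}_0(2\sqrt{x})<0$, by a genuinely different route. The paper never leaves the Meijer-G world: it differentiates $xG^{2,1}_{1,3}\left(x\big|_{0,0,-1}^{0}\right)$ directly using the parameter-shift identity \cite[Eq.~(9.31.3)]{integralbook}, which (after cancellation) gives $f'(x)=-G^{2,1}_{1,3}\left(x\big|_{0,0,-1}^{-1}\right)$, and then uses the coincidence $a_1=b_3=-1$ to reduce the order to $G^{2,0}_{0,2}\left(x\big|_{0,0}\right)=2\textbf{K}_0(2\sqrt{x})$. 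You instead collapse $f$ itself to the elementary closed form $f(x)=2\sqrt{x}\,\textbf{K}_1(2\sqrt{x})$ via $\int_0^{t}u\,\textbf{K}_0(u)\,du=1-t\,\textbf{K}_1(t)$ with $t=2\sqrt{x}$, and then differentiate with the elementary identity $\frac{d}{dt}[t\textbf{K}_1(t)]=-t\textbf{K}_0(t)$; your reduction is indeed valid, since $2\textbf{K}_0(2\sqrt{t})=G^{2,0}_{0,2}\left(t\big|_{0,0}\right)$ and the standard term-by-term integration formula gives $\int_0^x G^{2,0}_{0,2}\left(t\big|_{0,0}\right)dt=xG^{2,1}_{1,3}\left(x\big|_{0,0,-1}^{0}\right)$. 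The trade-off: the paper's argument is shorter if one trusts the G-function calculus, while yours avoids manipulating the G-symbol and, as a bonus, yields the simpler expression $p_i=2\sqrt{x}\,\textbf{K}_1(2\sqrt{x})$ (with $x=P_{th}/(4\zeta\sigma_f^2\sigma_b^2)$) for the Rayleigh detection probability, making the parameter dependence transparent. One presentational caution: since the theorem is stated as a standalone fact about the G-function, justify the identification $xG^{2,1}_{1,3}\left(x\big|_{0,0,-1}^{0}\right)=\int_0^{2\sqrt{x}}u\,\textbf{K}_0(u)\,du$ directly (as sketched above or via a tabulated reduction), rather than by appeal to the paper's $p_i$ derivation, which itself passes through the same G-function evaluation.
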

\begin{proof}
According to \cite[Eq. (9.31.3)]{integralbook} , we can derive the first derivative of $f(x)$ is
\begin{equation}
\begin{aligned}
\frac{df(x)}{dx}&=-G^{2,1}_{1,3}\left(x \Big|_{0, 0, -1}^{0}\right)-x\frac{dG^{2,1}_{1,3}\left(x \Big|_{0, 0, -1}^{0}\right)}{dx}\\
&=-G^{2,1}_{1,3}\left(x \Big|_{0, 0, -1}^{-1}\right)=-2K_0(2\sqrt{x})<0.
\end{aligned}
\end{equation}
Since the first derivative of $f(x)<0$, $f(x)$ is a monotonically decreasing function. Therefore, the Theorem \ref{theorem} is proved.
\end{proof}
Based on Theorem \ref{theorem}, we can find that the tag's location detection probability will be increased with the transmit power $P_s$ and reflection coefficient $|\Gamma|^2$ increase, but decreases as the distance between receive antenna $i$ and tag increases.
\subsection{SER Performance}
We consider the forward link channel $h_f$ and backscatter link channel of receiver $h_r$ both as Rayleigh fading,
then the instantaneous defined SNR is
\begin{equation}
\begin{aligned}
{\gamma=\frac{\chi P_{s}G_{tag}G_TG_RL(d_f)L(d_r)\left|h_f h_r\Gamma\right|^2}{N_0}=\Xi|h_fh_r|^2,}
\end{aligned}
\end{equation}
where $G_R$ is the antenna gain of the receiver, $d_r$ is the distance between the receiver and tag, $N_0$ is the noise power spectral density, and $\Xi= \chi P_{s}G_{tag}G_TG_RL(d_f)L(d_r)\left| \Gamma\right|^2/N_0$.

We utilize the moment generation function (MGF)-based method to derive the SER. For the $M$-PSK and $M$-QAM modulation, the SER can be written as follows  \cite{BER}
\begin{equation}
\begin{aligned}\label{PSK}
P_{\text{MPSK}}=\frac{1}{\pi}\int_0^{\pi-\frac{\pi}{M}}\mathcal{M}_{\gamma}\left(\frac{g_{\text{MPSK}}}{\sin^2\theta}\right)d\theta,
\end{aligned}
\end{equation}
\begin{equation}\label{QAM}
\begin{aligned}
P_{\text{MQAM}}=\frac{4}{\pi}\left(1-\frac{1}{\sqrt{M}}\right)\int_0^{\frac{\pi}{2}}\mathcal{M}_{\gamma}\left(\frac{g_{\text{MQAM}}}{\sin^2\theta}\right)d\theta\notag\\
-\frac{4}{\pi}\left(1-\frac{1}{\sqrt{M}}\right)^2\int_0^{\frac{\pi}{4}}
\mathcal{M}_{\gamma}\left(\frac{g_{\text{MQAM}}}{\sin^2\theta}\right)d\theta,
\end{aligned}
\end{equation}
where $g_{\text{MPSK}}=\sin^2(\pi/M)$ and $g_{\text{MQAM}}=\frac{3}{2(M-1)}$ are the $M$-PSK and $M$-QAM constellation constant, respectively. $\mathcal{M}_{\gamma}(\cdot)$ is the MGF of  $\gamma$, defined as
\begin{equation}\label{MGFgammal}
\mathcal{M}_{\gamma}(s)=\mathbb{E}\{e^{-s\gamma}\}=\int_0^{\infty}e^{-s\gamma}f(\gamma)d{\gamma},
\end{equation}
where $f(\gamma)$ is the PDF of $\gamma$. Let $X_1=|h_f|^2$ and $X_2=|h_r|^2$, if $h_f$ and $h_r$ are both Rayleigh fading, we have $X_1$ and $X_2$ are both Gamma distribution, with PDF
\begin{align}\label{Fnaka1}
&f_{X_1}(x_1)=\frac{x_1}{\Omega_f^{2}}\exp\left(-\frac{x_1}{\Omega_f}\right),\\\label{Fnaka2}
&f_{X_2}(x_2)=\frac{x_2}{\Omega_r^2}\exp\left(-\frac{x_2}{\Omega_r}\right),
\end{align}
where $\Omega_f$ and $\Omega_r$ are Rayleigh fading parameters of the forward link and backscatter link, respectively.
The MGF of $X=X_1X_2$ is defined as
\begin{align}\label{MGF1}
\mathcal{M}_{X}(s)=\int_0^{\infty}\int_0^{\infty}e^{-x_1x_2s}f(x_1,x_2)dx_1dx_2.
\end{align}
Substituting \eqref{Fnaka1} and \eqref{Fnaka2} into \eqref{MGF1}, we can obtain \cite{BER}
\begin{align}\label{UnMGF}
\mathcal{M}_X(s)=\left(\frac{1}{\Omega_f\Omega_r s}\right)^2\Psi\left(2,1,\frac{1}{\Omega_f\Omega_rs}\right),
\end{align}
where  $\Psi(a,b;z)$ is the Tricomi confluent hypergeometric function also denoted by $U(a,b,z)$. Based on $\gamma=\Xi X$, we have
\begin{align}\label{MGFSNR}
\mathcal{M}_{\gamma}(s)=\mathcal{M}_X(\Xi s)=\left(\frac{1}{\Omega_f\Omega_r \Xi s}\right)^2\Psi\left(2,1,\frac{1}{\Omega_f\Omega_r\Xi s}\right).
\end{align}

Substituting \eqref{MGFSNR} into \eqref{PSK} and \eqref{QAM}, the average SER for $M$-PSK and $M$-QAM can be derived. However, the derivation contains the hypergeometric function, making the closed form of the SER difficult to obtain. Let $\theta=\pi/2$, we can derive the Chernoff bound as
\begin{align}\label{MPSK}
P_{\text{MPSK}}\leq \left(1-\frac{1}{{M}}\right)\left(\frac{1}{g'_{\text{MPSK}}}\right)^2\Psi\left(2,1,\frac{1 }{g'_{\text{MPSK}}}\right),\\\label{MQAM}
P_{\text{MQAM}}\leq \left(1-\frac{1}{{M}}\right)\left(\frac{1}{g'_{\text{MQAM}}}\right)^2\Psi\left(2,1,\frac{1 }{g'_{\text{MQAM}}}\right),
\end{align}
where $g'_{\text{MPSK}}=\sin^2(\pi/M)\Omega_f\Omega_r\Xi$, $g'_{\text{MQAM}}=\frac{3\Omega_f\Omega_r \Xi}{2(M-1)}$.
The monotonicity of the function is hard to obtain directly from the first derivative of  \eqref{MPSK} and \eqref{MQAM} due to the hypergeometric function $\Psi(a,b;z)$. We will use the numerical methods to evaluate the performance in the next section.
\section{Numerical Results}
In this section, we present the simulation results of the tag's location detection probability and SER. On one hand, the numerical results can verify the accuracy of the theoretical analysis.  On the other hand, they can provide an intuitive evaluation when the theoretical derivative is too complex or not feasible to compute. All the simulation parameter settings are shown in Table \ref{parameter}. We use MATLAB to generate the Rician and Rayleigh simulations, with the number of simulations set at $10^6$. The Meijer G-function and Tricomi confluent hypergeometric function are also available in MATLAB.
\begin{table}[t]
	\centering
	\caption{ Parameters Values}
	\label{parameter}
	\begin{tabular}{c|c|c}
		\specialrule{0em}{0.5pt}{0.5pt}
		\hline
		\rowcolor[gray]{.9}
		\small Parameter&\small Description&\small Value\\[1ex]
		\hline
		$\chi$&Polarization loss factor &0.5\\
		\hline $G_{tag}$ & Tag's antenna gain  & 0dBi\\
		\hline $G_T$ & AP's transmit antenna gain & 0dBi\\
		\hline $G_R^i$ & AP's receive antenna gain& 0dBi\\
		\hline $G_R$ & Receiver's antenna gain& 0dBi\\
		\hline $|\Gamma|^2$& Reflection coefficient& [0, 1]\\
		\hline $\eta$& Path loss factor  &1.8\\
		\hline $P_{th}$ &Sensitivity threshold &-75dBm\\
		\hline $\sigma_f$ & Forward link Rician parameter &1\\
		\hline $\sigma_b$ & Backscatter link Rician parameter &1\\
		\hline
	\end{tabular}
\end{table}

\reffig{Detection} illustrates the performance of tag's location detection probability under different fading conditions. The performance under Rayleigh fading is worse than that under Rician fading. This is because Rayleigh fading lacks a line-of-sight (LOS) component and is more severely affected by multipath propagation. If the theoretical analysis in Rician fading is hard to obtain, using Rayleigh fading as an alternative can provide a more conservative benchmark for system performance. Moreover, increasing the Rician fading parameters $K_f$ and $K_b$ improves the performance. It is interesting to note that the performance remains consistent whether $K_f=1, K_b=0$ or $K_f=0, K_b=1$, which suggests the impact of these parameters on the cascaded channel fading is symmetric.

In \reffig{Distance}, we demonstrate how the distance between the tag and the receive antennas affects the tag's location detection probability. As we can see, when the receive antenna distance is 5m away from the tag, the detection probability is about 30\% with $P_s=1W$. The detection probability significantly increases to  90\% when the distance is reduced to  2m. Moreover, the probability that at least 3 antennas can detect the tag's location is about 50\%. These results indicate that achieving a detection probability of at least 50\% requires that the distance between the tag and any receive antenna cannot exceed 5m. We also verified the independence of each receive antenna during simulations. The correlation heatmap in \reffig{correlation} shows the low correlation between different antennas, which supports the reliability of the simulation results.

In \reffig{recoefficient}, we analyze the effects of increasing the reflection coefficient on the tag's location detection probability and SER. The tag's location detection probability increases as the reflection coefficient increases, while the SER decreases. However, the variations in SER across different reflection coefficients are relatively minor, which indicates that the choice of the reflection coefficient is flexible without significantly compromising the quality of communication. Furthermore, the SER has better performance under QAM modulation than that of PSK.
\begin{figure}[t]
	\centering
	\includegraphics[width=75mm]{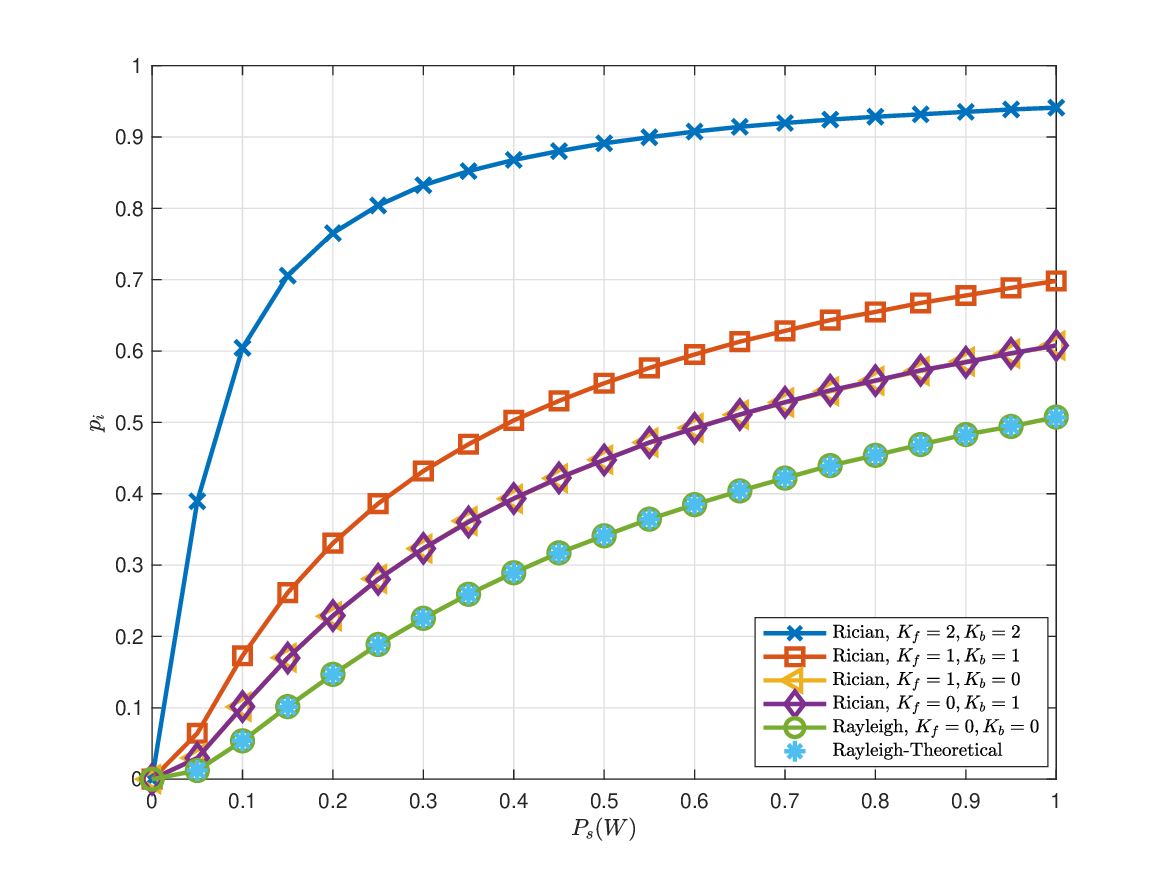}\\
	\caption{Tag's location detection probability of $i$-th receive antenna $p_i$ versus transmit power $P_s$ under different fading channels.}\label{Detection}
\end{figure}
\begin{figure}[t]
	\centering
	\includegraphics[width=75mm]{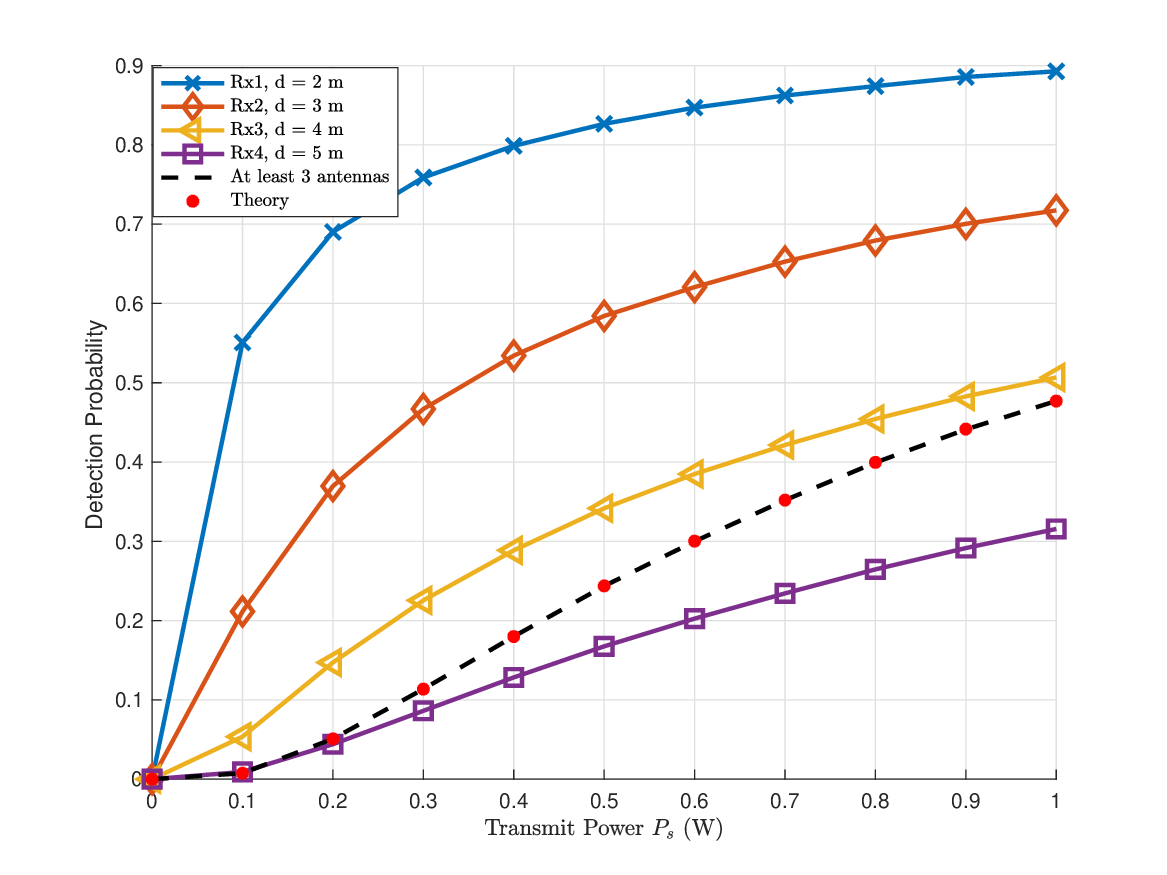}\\
	\caption{Tag's location detection probability for different receive antennas.}\label{Distance}
\end{figure}
\begin{figure}[t]
	\centering
	\includegraphics[width=75mm]{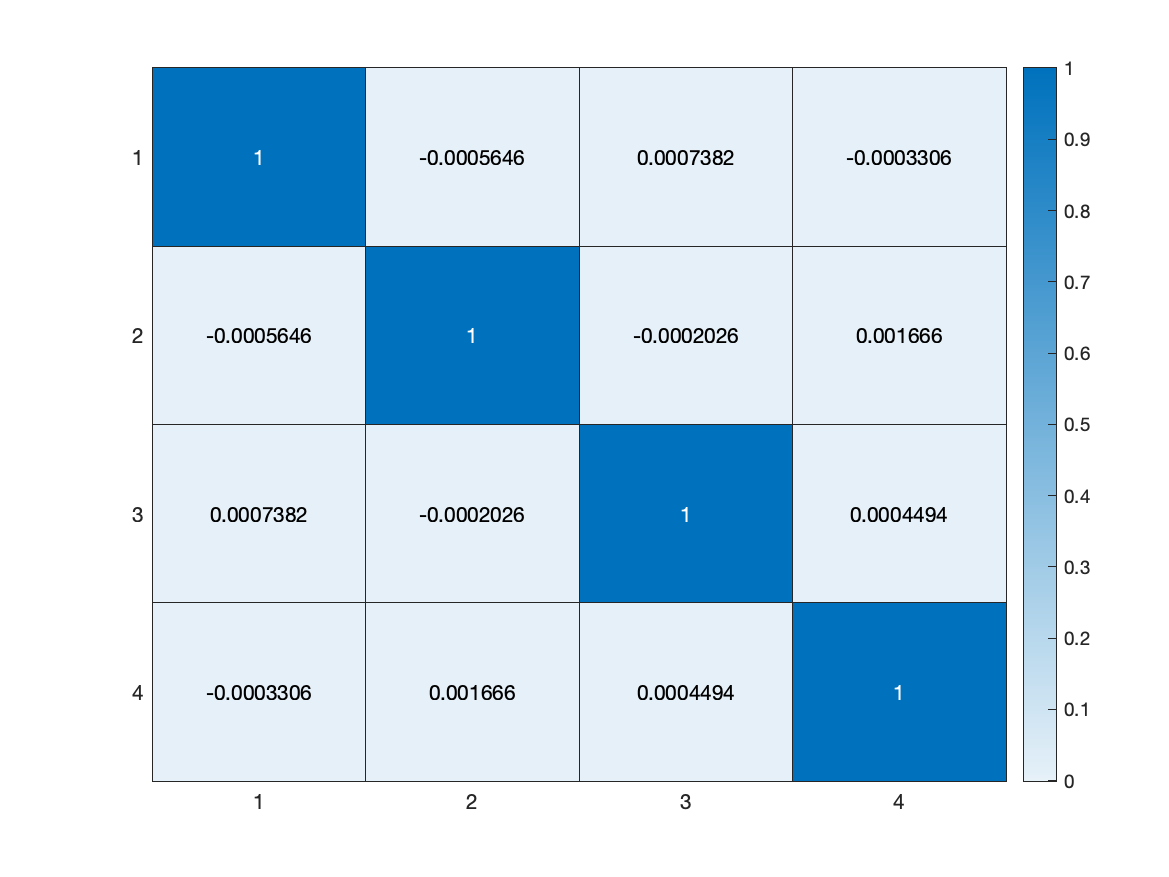}\\
	\caption{The correlation heatmap of different antennas.}\label{correlation}
\end{figure}
\begin{figure}[t]
	\centering
	\includegraphics[width=75mm]{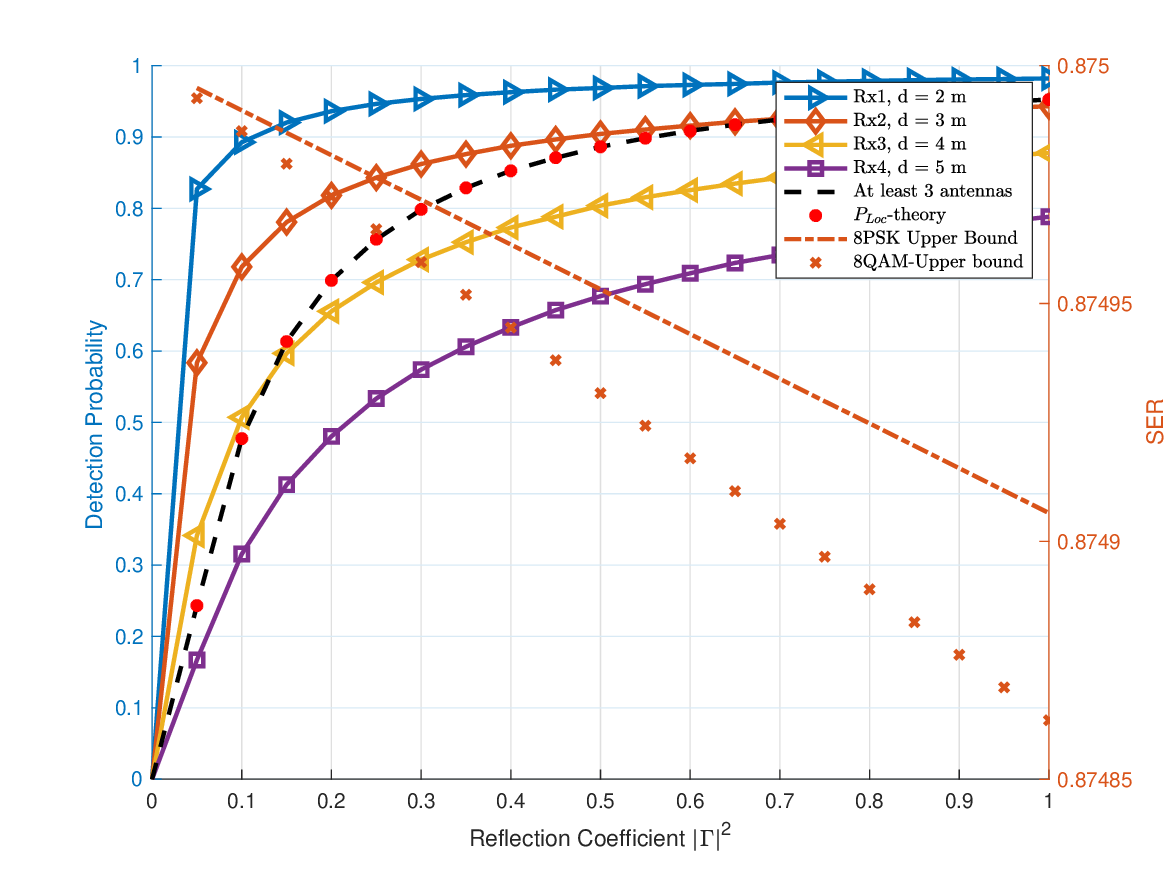}\\
	\caption{Tag's detection probability versus the reflection coefficient.}\label{recoefficient}
\end{figure}
\section{Conclusion}
In this paper, we propose a novel net-zero ISAC model utilizing the backscatter properties of net-zero devices, which simplifies the system design without additional power consumption and hardware. Our proposed frame structure design effectively mitigates interference at the transmitter, tag, and receiver. Through theoretical derivation of the tag's location detection probability and SER, we obtain valuable insights on parameter adjustments for improving sensing and communication performance. Furthermore, we demonstrate the flexibility of choosing the reflection coefficient to improve tag's location detection probability without significant sacrifice in communication performance. 
\section{Acknowledgement}
This work was supported by the UK Engineering and Physical Sciences Research Council (EPSRC) under Grant EP/Y000315/1.

\balance

\begin{thebibliography}{99}
	\bibitem{OpenJ}
	T. Xu, F. Liu, C. Masouros, and I. Darwazeh, ``An experimental proof of concept for integrated sensing and communications waveform design," \emph{IEEE open J. Commun. Soc.,} vol. 3, pp. 1643--1655, Sep. 2022.
	\bibitem{full}
	Z. He, W. Xu, H. Shen, D. W. K. Ng, Y. C. Eldar, and X. You, ``Full-duplex communication for ISAC: Joint beamforming and power optimization," \emph{IEEE J. Sel. Areas Commun.,} vol. 41, no. 9, pp. 2920--2936, Sep. 2023.
		\bibitem{yu}
Y. Zhang, F. Gao, B. Li, and Z. Han, ``A robust design for ultra reliable ambient backscatter communication systems," \emph{IEEE Internet Things J.,} vol. 6, no. 5, pp. 8989--8999, Oct. 2019. 
 \bibitem{ambient}
D. T. Hoang, D. Niyato, D. I. Kim, N. V. Huynh, and S. Gong, \emph{Ambient Backscatter Communication Networks.} Cambridge Univ. Press, Cambridge, U.K., 2020.
\bibitem{ABC}
D. Galappaththige, C. Tellambura, and A. Maaref, ``Integrated sensing and backscatter communication," \emph{IEEE Wireless Commun. Lett.,}, vol. 12, no. 12, pp. 2043--2047, Dec. 2023.
 \bibitem{risassisted}
X. Wang, Z. Fei, and Q. Wu, ``Integrated sensing and communication for RIS-assisted backscatter systems," \emph{IEEE Internet Things J.,} vol. 10, no. 15, pp. 13716--13726, Aug. 2023.
\bibitem{cascaded}
A. Bekkali, S. Zou, A. Kadri, M. Crisp, and R. V. Penty, ``Performance analysis of passive UHF RFID systems under cascaded fading channels and interference effects,” \emph{IEEE Trans. Wireless Commun.,} vol. 14, no. 3, pp. 1421--1433, Mar. 2015.
\bibitem{book}
A. Kumar, D. Manjunath, and J. Kuri, \emph{Communication Networking: An Analytical Approach.} San Francisco, CA: Morgan Kaufmann Publishers Inc., 2004.
\bibitem{IoTlocalization}
B. S. Çiftler, A. Kadri, and I. Güvenç, ``IoT localization for bistatic passive UHF RFID systems with 3-D radiation pattern," \emph{IEEE Internet Things J.,} vol. 4, no. 4, pp. 905--916, Aug. 2017.
\bibitem{rician}
M. K. Simon, \emph{Probability Distributions Involving Gaussian Random Variables: A Handbook for Engineers and Scientists.} Norwell, MA: Kluwer, 2002.
\bibitem{integralbook}
I. S. Gradshteyn and I. M. Ryzhik, \emph{Table of Integrals, Series, and Products,} California: Elsevier, 7th edition, 2007.
\bibitem{BER}
Y. Zhang, F. Gao, L. Fan, X. Lei and G. K. Karagiannidis, ``Backscatter communications over correlated Nakagami-m fading channels," \emph{IEEE Trans. Commun,} vol. 67, no. 2, pp.1693--1704, Feb. 2019.
\end{thebibliography}
 \end{document}